\documentclass[conference]{IEEEtran}
   
\IEEEoverridecommandlockouts
\usepackage{graphicx}
\usepackage{amsmath}
\usepackage[noend]{algpseudocode}
\usepackage{algorithmicx,algorithm}
\usepackage{amsthm}
\usepackage{amsfonts}
\usepackage{subfigure}
\usepackage{color}
\usepackage{cite}
\usepackage{setspace}

\newtheorem{Theorem}{Theorem}

\usepackage{amssymb}
\usepackage{color}
\graphicspath{{images/}}
\usepackage{booktabs}
\usepackage{multirow}
\usepackage{makecell}
\usepackage[numbers,sort&compress]{natbib}

\usepackage{stfloats}
\usepackage{amsmath}
\usepackage{booktabs}
\usepackage{amssymb}
\usepackage{enumitem}
\usepackage{threeparttable}
\usepackage{comment}
\usepackage{bm}
\usepackage{booktabs}      
\usepackage{graphicx}     
\usepackage[table]{xcolor} 
\usepackage{multirow}
\setlist{noitemsep, topsep=2pt, parsep=0pt, partopsep=0pt}
\begin{document}
	
	\title{Thermodynamic Digital Twins: Physics-Native Semantic Synchronization for 6G Kinetic Swarms}
	
\author{Wen-Yu Dong,~\IEEEmembership{Member,~IEEE}, Weiwei Jiang,~\IEEEmembership{Senior Member,~IEEE}, Song Zhao, Rui-Si Han,\\ Qi Bi,~\IEEEmembership{Fellow,~IEEE}, Sheng Chen,~\IEEEmembership{Life Fellow,~IEEE}	%

	\thanks{W.-Y. Dong, S. Zhao and Q. Bi are with Future Technology Research Center, China Telecom Research Institute, Beijing 102209, China (E-mails: dongwy@chinatelecom.cn; zhaosong1@chinatelecom.cn; qibi@chinatelecom.cn)} %
		\thanks{W. Jiang is with the School of Information and Communication Engineering, Beijing University of Posts and Telecommunications, Beijing, 100876, China (Email: jww@bupt.edu.cn)}
	\thanks{R.-S. Han is with Cloud Network Operating System R\&D Center, China Telecom, Beijing 102209, China (E-mail: hanruisi@chinatelecom.cn)} %
\thanks{S. Chen is with the School of Electronics and Computer Science, University of Southampton, Southampton SO17 1BJ, U.K. (E-mail: sqc@ecs.soton.ac.uk).} %
\vspace*{-8mm}
}

\maketitle 
	\pagestyle{empty}
	\thispagestyle{empty}
	
\begin{abstract}
		The synchronization of large-scale kinetic swarms with digital twin networks can generate substantial signaling overhead when predictable motion is repeatedly reported. To reduce such redundancy, we propose the thermodynamic digital twin (TDT), a physics-guided synchronization framework that combines field-based state extrapolation with event-triggered correction. For a closed domain with vanishing boundary flux, the divergence--entropy relation is used as a physical motivation for identifying structural changes in the reconstructed swarm field. Based on Reynolds decomposition, the hierarchical thermodynamic synchronization (HTS) protocol separates low-dimensional macroscopic drift updates from full-state microscopic corrections. A locally measurable smoothed-particle-hydrodynamics descriptor, combining velocity divergence and acceleration, provides an implementation-oriented early-warning trigger, while a transport-error analysis provides a short-interval rule for limiting silent extrapolation. The same extrapolation mechanism is further used as a predictor-like delay compensator in the closed-loop digital twin. Across the evaluated numerical settings, TDT reduces the generated edge-to-cloud synchronization payload relative to the considered baselines and exhibits a favorable empirical trade-off between tracking error and communication cost. The results also indicate improved delay tolerance in the evaluated control loop.
\end{abstract}

\begin{IEEEkeywords}
	Digital twin networks, hierarchical synchronization, event-triggered updates, smoothed particle hydrodynamics, edge sensing, kinetic swarms.
\end{IEEEkeywords}
	
\section{Introduction}\label{S1}
\IEEEPARstart{L}{arge-scale} kinetic systems span autonomous unmanned aerial vehicle (UAV) networks \cite{DongJSAC,DongTCOM}, satellite--aerial networks \cite{Tian2023}, and massive low-Earth-orbit (LEO) satellite constellations \cite{IoTJDong}. In flying ad hoc networks (FANETs), rapid node mobility induces frequent topology and link variations \cite{Yang2025FANET,Wang2025FANET}, making timely digital-state maintenance challenging. Such systems increasingly rely on 6G digital twin networks (DTNs) for cyber--physical synchronization and closed-loop control. For a swarm of $N$ agents, repeatedly transmitting every agent state incurs an $\mathcal{O}(N)$ payload, creating substantial overhead on bandwidth-limited edge-to-cloud synchronization links.

To mitigate this bottleneck, existing digital-twin research has focused on low-latency networking and edge association \cite{Sun2020}, joint communication--computation optimization \cite{Yang2024}, and data-driven or semantic twin construction \cite{Li2021SemanticDT}. These studies improve update delivery or state reconstruction but do not directly determine whether a model-predictable update is required by the downstream task \cite{Yang2023Semantic}. Related policies include age-of-information (AoI)-aware scheduling \cite{Kaul2012}, event-triggered control \cite{Heemels2012}, value-of-information (VoI) methods \cite{Soleymani2022}, and model-based tracking \cite{Blom1988}. Our design is closest to event-triggered and model-based schemes, but selects among silence, a macro update, and a full-state correction using a locally estimated field-deformation descriptor.
Fluid-based frameworks model non-stationary networks through density, velocity, and flux fields \cite{Dong2026FSTSG}, with extensions to UAV-logistics provisioning \cite{DongDigitalTides2026}, LEO mega-constellation interference \cite{DongFSAT}, and loop-free information transport in ultra-dense networks \cite{Dong2026VDR}; however, they do not address synchronization timing and granularity.

	To reduce redundant synchronization under this semantic premise, we propose the thermodynamic digital twin (TDT), which combines field-based extrapolation with hierarchical event-triggered correction. Rather than claiming the divergence--entropy relation as a new information-theoretic identity, we use its closed-domain form as physical motivation for constructing a locally measurable synchronization cue. The resulting hierarchical thermodynamic synchronization (HTS) protocol combines low-dimensional macroscopic drift updates, smoothed-particle-hydrodynamics (SPH)-based microscopic change detection, and predictor-like state propagation between synchronization events.

The main contributions of this paper are summarized as follows.
\begin{itemize}
\item \textit{Hierarchical synchronization architecture:}
We develop the HTS protocol to separate coherent macroscopic drift from local structural deformation. The protocol selects among silence, a constant-size macro update, and a full-state micro-correction, thereby avoiding repeated full-state transmission during coherent motion.
	
	\item \textit{Locally measurable triggering and silent-interval design:}
		We construct an SPH-based kinematic descriptor that combines local velocity divergence and acceleration as an implementation-oriented early-warning trigger. Separately, an $L_1$ transport-error analysis yields a short-interval design rule for the maximum silent interval, complementing the practical trigger.
	
\item \textit{Predictive synchronization and delay-tolerance evaluation:}
The digital twin propagates the reconstructed field between updates using the continuity equation and applies the same extrapolation mechanism for predictor-like delay compensation. Numerical evaluations examine the divergence--entropy relation, computational scalability, communication--tracking trade-offs, and closed-loop delay sensitivity against periodic, reactive, and Transformer-based baselines.
\end{itemize}

\section{System Model and Problem Formulation}\label{S2}
\begin{figure*}[!b]
	\vspace*{-3mm}
	\centering
	\includegraphics[width=1\textwidth]{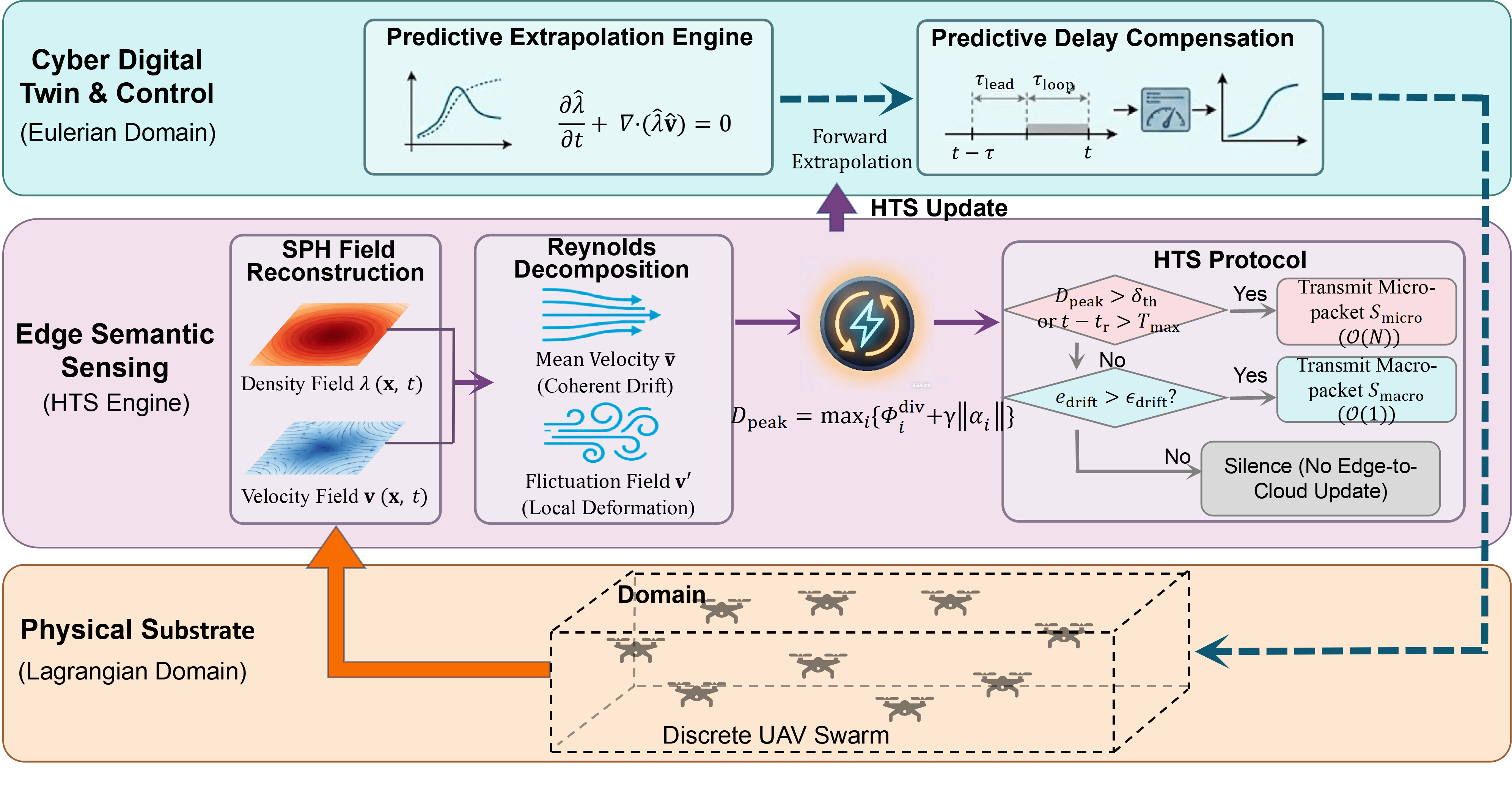}
	\vspace*{-8mm}
	\caption{Architecture of the TDT framework. UAV-to-edge telemetry supports local field reconstruction, while HTS selects silence, constant-size macro updates, or full-state micro corrections on the evaluated edge-to-cloud synchronization link. The cloud extrapolation engine further supports predictor-based delay compensation.}
	\label{fig:system_model} 
\end{figure*}

As illustrated in Fig.~\ref{fig:system_model}, TDT comprises a physical swarm, an edge sensing and synchronization engine, and a cloud extrapolation engine. We consider $N$ agents, indexed by $\mathcal{N}=\{1,\ldots,N\}$, moving within a bounded three-dimensional domain $\Omega\subset\mathbb{R}^3$. We focus on edge-to-cloud synchronization; UAV-to-edge telemetry acquisition is treated as an underlying sensing/access process and is excluded from the reported synchronization payload. To reduce population-dependent modeling complexity, the discrete swarm is represented by continuous Eulerian fields.

\subsection{Physical Substrate Modeling}\label{sec:physical_substrate}

While the physical agents possess discrete states $\{\mathbf{x}_i(t), \mathbf{v}_i(t)\}_{i \in \mathcal{N}}$, we reconstruct a continuous macroscopic density field $\lambda(\mathbf{x}, t)$ using the mesh-free smoothed particle hydrodynamics (SPH) formalism \cite{Price2012_SPH}. The spatial density and the regularized velocity field $\mathbf{v}(\mathbf{x}, t)$ are defined as:
\begin{align} 
	\lambda(\mathbf{x}, t) &= \sum_{i \in \mathcal{N}} W\left( \|\mathbf{x} - \mathbf{x}_i(t)\|, h \right), \label{eq:SPH_density} \\
	\mathbf{v}(\mathbf{x}, t) &= \frac{\sum_{i \in \mathcal{N}} \mathbf{v}_i(t) W(\|\mathbf{x} - \mathbf{x}_i(t)\|, h)}{\lambda(\mathbf{x}, t) + \vartheta}, \label{eq:SPH_velocity}
\end{align}
where $W(\cdot, h)$ is the normalized smoothing kernel with smoothing length $h$, and $\vartheta \to 0^+$ is a regularization term preventing singularities in sparse regions. 

Following the fluid-spatiotemporal field representation of non-stationary network populations in \cite{Dong2026FSTSG}, we characterize the swarm through its macroscopic density and velocity fields. Under the closed-domain and agent-conservation assumptions adopted here, the evolution of $\lambda(\mathbf{x},t)$ satisfies the continuity equation:
\begin{equation} 
	\frac{\partial \lambda}{\partial t} + \nabla \cdot (\lambda \mathbf{v}) = 0.
	\label{eq:physical_continuity}
\end{equation}
Using the material derivative $\frac{D}{Dt}\triangleq\frac{\partial}{\partial t}+\mathbf{v}\cdot\nabla$, \eqref{eq:physical_continuity} gives $\frac{D\lambda}{Dt}=-\lambda\nabla\cdot\mathbf{v}$. Hence, velocity divergence provides a local measure of volumetric compression or expansion. It does not capture divergence-free deformation such as shear or rotation and is therefore used here as a lightweight deformation cue rather than a complete descriptor.

\subsection{Thermodynamic Digital Twin Architecture}\label{sec:tdt_architecture}

The TDT architecture implements a split-computing paradigm where the edge node functions as a semantic extractor and the cloud acts as an extrapolation engine.

\subsubsection{Edge Semantic Sensing}
The edge node continuously aggregates local telemetry to assess macroscopic drift and local structural variation. The mean velocity
$\bar{\mathbf{v}}(t)=|\mathcal{N}|^{-1}\sum_{i\in\mathcal{N}}\mathbf{v}_i(t)$
tracks the global transport drift. Concurrently, to detect local deformation and maneuver changes, the edge computes the kinematic descriptor
$\mathcal{D}_{\text{peak}}(t)$ using a compact-support SPH formulation: 
\begin{equation}
	\mathcal{D}_{\text{peak}}(t)
	=
	\max_{i\in\mathcal N}
	\left\{
	\left|
	\sum_{j\in\mathcal N_i}
	\frac{1}{\lambda_j}
	\mathbf v_{ji}\cdot\nabla_i W_{ij}
	\right|
	+\gamma\|\mathbf a_i\|
	\right\},
	\label{eq:sph_divergence}
\end{equation}
where
$\mathcal{N}_i=\{j:\|\mathbf{x}_i-\mathbf{x}_j\|\leq\kappa h\}$ is the neighbor set,
$W_{ij}=W(\|\mathbf{x}_i-\mathbf{x}_j\|,h)$,
$\lambda_j=\lambda(\mathbf{x}_j,t)$,
$\mathbf{v}_{ji}=\mathbf{v}_j-\mathbf{v}_i$,
and $\mathbf{a}_i=\dot{\mathbf{v}}_i$.
The operator $\nabla_iW_{ij}$ denotes the kernel gradient with respect to $\mathbf{x}_i$, while $\gamma$ balances the divergence and acceleration terms. The descriptor therefore responds to both local compression/expansion and rapid maneuver changes.

\subsubsection{Cloud Extrapolation Engine}
During the silent interval $t \in (t_k, t_{k+1})$ between synchronization events, the DTN autonomously extrapolates the digital density field $\hat{\lambda}(\mathbf{x}, t)$. Exploiting the frozen-flow assumption, the twin utilizes the last synchronized velocity field $\hat{\mathbf{v}}(\mathbf{x}, t_k)$ to advance the state via the integral form of the continuity equation:
\begin{equation} 
	\hat{\lambda}(\mathbf{x}, t) = \hat{\lambda}(\mathbf{x}, t_k) - \int_{t_k}^{t} \nabla \cdot \left( \hat{\lambda}(\mathbf{x}, \tau) \hat{\mathbf{v}}(\mathbf{x}, t_k) \right) \mathrm{d}\tau.
	\label{eq:digital_extrapolation}
\end{equation}
This physics-based extrapolation reduces repeated edge-to-cloud updates during predictable motion.

\subsection{Synchronization Metrics and Problem Formulation}
\label{sec:problem_formulation}

	For massive kinetic swarms, formation coverage and other field-level tasks depend primarily on the spatial distribution of agents rather than their individual identities. We therefore normalize the physical and extrapolated intensity fields as
	\begin{equation}
		p(\mathbf{x},t)
		\triangleq
		\frac{\lambda(\mathbf{x},t)}{N},
		\qquad
		\hat{p}(\mathbf{x},t)
		\triangleq
		\frac{\hat{\lambda}(\mathbf{x},t)}{N}.
		\label{eq:normalized_fields}
	\end{equation}

	We report the Kullback--Leibler (KL) divergence as a supplementary diagnostic of distributional mismatch:
	\begin{equation}
		\mathcal{E}_{\mathrm{KL}}(t)
		\triangleq
		D_{\mathrm{KL}}
		\bigl(
		p(\cdot,t)
		\,\Vert\,
		\hat{p}_{\xi}(\cdot,t)
		\bigr),
		\label{eq:kl_diagnostic}
	\end{equation}
	where $\hat{p}_{\xi}$ is obtained by lower-clipping $\hat{p}$ at a small positive value $\xi$ and renormalizing it over $\Omega$. Because this numerical regularization changes the reference density, $\mathcal{E}_{\mathrm{KL}}$ is used only for diagnostic evaluation. The formal tracking metric is the normalized $L_1$ field error
	\begin{align}
		\mathcal{E}_{1}(t)
		&\triangleq
		\left\|
		p(\cdot,t)-\hat{p}(\cdot,t)
		\right\|_{1}
		\nonumber\\
		&=
		\frac{1}{N}
		\left\|
		\lambda(\cdot,t)-\hat{\lambda}(\cdot,t)
		\right\|_{1},
		\label{eq:l1_tracking_error}
	\end{align}
	which is adopted as the object of the formal extrapolation analysis and synchronization constraint.

	Let
	$\pi=\{(t_k,q_k)\}_{k\in\mathbb N}$
	denote a synchronization policy, where $t_k$ is an update time and
	$q_k\in\{\mathrm{macro},\mathrm{micro}\}$ is the transmitted packet type. If $B_{q_k}$ denotes the corresponding generated synchronization payload, the design objective is
	\begin{equation}
		\begin{aligned}
			\min_{\pi}
			\quad&
			\bar{C}(\pi)
			=
			\limsup_{T\rightarrow\infty}
			\frac{1}{T}
			\sum_{k:t_k\leq T}
			B_{q_k},
			\\
			\mathrm{s.t.}
			\quad&
			\mathcal{E}_{1}(t)
			\leq
			\epsilon_1,
			\qquad
			\forall t.
		\end{aligned}
		\label{eq:optimization_problem}
	\end{equation}
	The proposed HTS protocol provides a structured physics-guided policy for this problem by selecting among silence, a macro update, and a micro correction. We do not claim that HTS globally solves
	\eqref{eq:optimization_problem}.


	\section{Physics-Guided Field Triggering and Synchronization}\label{S3}
	Building on the Eulerian field representation, we use the closed-domain divergence--entropy relation as physical motivation for constructing a locally measurable field-deformation indicator. The formal tracking requirement is stated separately through the normalized $L_1$ transport error.

\subsection{Closed-Domain Divergence--Entropy Relation}
\label{sec:entropy_relation}

	For the normalized continuum density
	$p(\mathbf{x},t)=\lambda(\mathbf{x},t)/N$, we use the differential entropy
	\begin{equation}
		H(t)
		\triangleq
		-\int_{\Omega}
		p(\mathbf{x},t)\ln p(\mathbf{x},t)
		\,\mathrm{d}\mathbf{x}.
		\label{eq:differential_entropy}
	\end{equation}
	This quantity describes the dispersion of the reconstructed macroscopic field and should not be interpreted as the entropy of the exact discrete agent micro-state.
	
	For completeness, under the closed-domain condition
	$\mathbf{v}\cdot\mathbf{n}|_{\partial\Omega}=0$, differentiating
	\eqref{eq:differential_entropy} and substituting
	$\partial_t p=-\nabla\cdot(p\mathbf{v})$ gives
	\begin{align}
		\frac{\mathrm{d}H(t)}{\mathrm{d}t}
		&=
		\int_{\Omega}
		(1+\ln p)\nabla\cdot(p\mathbf{v})
		\,\mathrm{d}\mathbf{x} \nonumber\\
		&=
		\int_{\Omega}
		p(\mathbf{x},t)\nabla\cdot\mathbf{v}(\mathbf{x},t)
		\,\mathrm{d}\mathbf{x} \nonumber\\
		&=
		\mathbb{E}_{p}
		[\nabla\cdot\mathbf{v}(\mathbf{x},t)].
		\label{eq:closed_entropy_relation}
	\end{align}
	Equation~\eqref{eq:closed_entropy_relation} is a standard consequence of the continuity equation under the stated no-flux condition. In this work, its role is not to constitute a new entropy identity, but to motivate velocity divergence as a physically interpretable indicator of field deformation. Because positive and negative local divergence can cancel in the spatial expectation, the implemented HTS trigger uses the maximum local absolute divergence together with acceleration, rather than directly using
	$\mathrm{d}H/\mathrm{d}t$.

\subsection{Hierarchical Thermodynamic Synchronization (HTS)}\label{sec:protocol}

	Motivated by the communication--tracking formulation in \eqref{eq:optimization_problem}, we introduce the HTS protocol as a structured physics-guided synchronization policy. By applying Reynolds decomposition, we separate the instantaneous velocity field into a macroscopic mean component and a microscopic fluctuating component: $\mathbf{v}(\mathbf{x},t) = \bar{\mathbf{v}}(t) + \mathbf{v}'(\mathbf{x},t)$. Based on this decomposition, TDT synchronizes the network state across two semantic planes.

\subsubsection{Layer 1: Macro-Semantic Plane}
This layer tracks the mean velocity $\bar{\mathbf{v}}(t)$ that characterizes coherent swarm drift. A macro update is triggered when
$\|\bar{\mathbf{v}}(t)-\hat{\mathbf{v}}_{\mathrm{ref}}\|>\epsilon_{\mathrm{drift}}$,
where $\hat{\mathbf{v}}_{\mathrm{ref}}$ denotes the mean-velocity reference maintained by the digital twin. The edge then sends
$\mathcal{S}_{\mathrm{macro}}=\{\bar{\mathbf{v}}(t)\}$.
The twin updates only the mean component while retaining the previously reconstructed fluctuation field,
\begin{equation}
	\hat{\mathbf{v}}^{+}(\mathbf{x},t)
	=
	\bar{\mathbf{v}}(t)
	+
	\hat{\mathbf{v}}'^{-}(\mathbf{x},t),
\end{equation}
where the superscripts $-$ and $+$ denote the pre- and post-update fields, respectively. The twin then sets
$\hat{\mathbf{v}}_{\mathrm{ref}}=\bar{\mathbf{v}}(t)$.
Because the full spatial field is not reconstructed, a macro update does not reset the most recent micro-correction time $t_r$. The resulting payload remains $\mathcal{O}(1)$.

\subsubsection{Layer 2: Micro-Physical Plane}
This layer handles local deformation represented by the fluctuating component $\mathbf{v}'(\mathbf{x},t)$. A full-state correction is triggered when
\begin{equation}
	\mathcal{D}_{\text{peak}}(t)>\delta_{\text{th}}
	\quad\text{or}\quad
	t-t_r\geq T_{\max},
	\label{eq:micro_trigger}
\end{equation}
where $t_r$ denotes the most recent full-state micro-correction time and $T_{\max}$ is the timeout safeguard derived in Section~\ref{sec:transport_bound}.
The edge then sends
$\mathcal{S}_{\text{micro}}=\{\mathbf{x}_i,\mathbf{v}_i\}_{i\in\mathcal{N}}$,
which reconstructs the density and velocity fields with an $\mathcal{O}(N)$ payload. Thus, the deformation trigger provides early correction, while the timeout prevents indefinitely long extrapolation when the local descriptor remains below threshold. 

\begin{algorithm}[!htbp]
	\caption{Hierarchical Thermodynamic Synchronization}
	\label{alg:hts}
	\small
	\begin{algorithmic}[1]
		\Require Telemetry
		$\{\mathbf{x}_i,\mathbf{v}_i,\mathbf{a}_i\}_{i\in\mathcal N}$;
		$h,\kappa,\gamma,\delta_{\mathrm{th}},
		\epsilon_{\mathrm{drift}},T_{\max}$;
		$\hat{\mathbf v}_{\mathrm{ref}},t_r$
		\Ensure
		$\mathcal S\in
		\{\varnothing,\mathcal S_{\mathrm{macro}},
		\mathcal S_{\mathrm{micro}}\}$
		
		\State
		$\bar{\mathbf v}\gets
		N^{-1}\sum_{i\in\mathcal N}\mathbf v_i$
		
		\For{$i\in\mathcal N$}
		\State
		$\mathcal N_i\gets
		\{j:\|\mathbf x_i-\mathbf x_j\|\leq\kappa h\}$
		\State
		$\lambda_i\gets
		\sum_{j\in\mathcal N_i}W_{ij}$
		\State
		$\phi_i^{\mathrm{div}}\gets
		\left|
		\sum_{j\in\mathcal N_i}
		\frac{1}{\lambda_j}
		(\mathbf v_j-\mathbf v_i)
		\cdot\nabla_iW_{ij}
		\right|$
		\EndFor
		
		\State
		$\mathcal D_{\mathrm{peak}}\gets
		\max_{i\in\mathcal N}
		\{\phi_i^{\mathrm{div}}+\gamma\|\mathbf a_i\|\}$
		
		\If{$\mathcal D_{\mathrm{peak}}>\delta_{\mathrm{th}}$
			\textbf{ or } $t-t_r\geq T_{\max}$}
		\State
		$\hat{\mathbf v}_{\mathrm{ref}}\gets\bar{\mathbf v}$;
		$t_r\gets t$
		\State
		\Return
		$\mathcal S_{\mathrm{micro}}
		=\{\mathbf x_i,\mathbf v_i\}_{i\in\mathcal N}$
		\ElsIf{
			$\|\bar{\mathbf v}-\hat{\mathbf v}_{\mathrm{ref}}\|
			>\epsilon_{\mathrm{drift}}$
		}
		\State
		$\hat{\mathbf v}_{\mathrm{ref}}\gets\bar{\mathbf v}$
		\State
		\Return
		$\mathcal S_{\mathrm{macro}}=\{\bar{\mathbf v}\}$
		\Else
		\State
		Propagate $\hat{\lambda}$ using
		\eqref{eq:digital_extrapolation}
		\State
		\Return $\varnothing$
		\EndIf
	\end{algorithmic}
\end{algorithm}

	The support factor $\kappa$ is determined by the selected compact-support SPH kernel. The timeout $T_{\max}$ limits the interval since the most recent full-state reset, whereas $\mathcal{D}_{\mathrm{peak}}$ provides an earlier locally measurable deformation trigger. Since a micro packet subsumes the mean-velocity information contained in a macro packet, the micro condition is evaluated first, as summarized in Algorithm~\ref{alg:hts}.


	\section{Performance Analysis: Fidelity, Scalability, and Latency}\label{S4}
	We examine TDT in terms of transport-error accumulation, generated synchronization payload, and predictor-assisted closed-loop control.
	
	\subsection{Transport-Error Bound and Silent-Interval Design}
	\label{sec:transport_bound}
	The formal metric $\mathcal{E}_{1}(t)$ and the supplementary diagnostic $\mathcal{E}_{\mathrm{KL}}(t)$ were defined in Section~\ref{sec:problem_formulation}. Only $\mathcal{E}_{1}(t)$ is used in the following transport-error analysis. Let $t_r$ denote the most recent full-state micro-correction time. For the local bound, we consider the silent extrapolation segment immediately following this reset, such that
	$p(\cdot,t_r)=\hat p(\cdot,t_r)$
	and no synchronization update occurs within $(t_r,t]$.
	The physical field evolves under $\mathbf{v}(\mathbf{x},s)$, whereas the digital twin uses the frozen velocity field
	\begin{equation}
		\hat{\mathbf{v}}(\mathbf{x},s)
		=
		\mathbf{v}(\mathbf{x},t_r),
		\qquad s\in[t_r,t].
		\label{eq:frozen_velocity}
	\end{equation}

	\begin{Theorem}
		\label{thm:transport_bound}
		Assume that the physical and extrapolated fields satisfy the same no-flux boundary condition,
		$p(\cdot,t_r)=\hat{p}(\cdot,t_r)$, and
		\begin{equation}
			\|\partial_t\mathbf{v}\|_{\infty}\leq A_{\max},
			\qquad
			\|\nabla\cdot(\partial_t\mathbf{v})\|_{\infty}
			\leq D_{\max}^{\mathrm{acc}},
			\qquad
			A_{\max}>0.
			\label{eq:acceleration_bounds}
		\end{equation}
		Then, for a sufficiently short silent interval,
		\begin{equation}
			\mathcal{E}_{1}(t)
			\leq
			\frac{1}{2}
			C_pA_{\max}(t-t_r)^2
			+
			\mathcal{O}\bigl((t-t_r)^3\bigr),
			\label{eq:transport_bound}
		\end{equation}
		where
		\begin{equation}
			C_p
			\triangleq
			\sup_{s\in[t_r,t]}
			\left(
			\|\nabla p(\cdot,s)\|_1
			+
			\frac{D_{\max}^{\mathrm{acc}}}{A_{\max}}
			\|p(\cdot,s)\|_1
			\right).
			\label{eq:normalized_structural_constant}
		\end{equation}
		Here, $A_{\max}$ bounds the Eulerian temporal variation of the velocity field and should not be confused with the agent acceleration $\mathbf{a}_i$ used in the practical trigger. Since $p$ is normalized, $\|p(\cdot,s)\|_1=1$.
	\end{Theorem}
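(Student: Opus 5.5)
We will study the error field $e(\mathbf{x},s)\triangleq p(\mathbf{x},s)-\hat p(\mathbf{x},s)$ on $[t_r,t]$. By the continuity equation \eqref{eq:physical_continuity} and the extrapolation rule \eqref{eq:digital_extrapolation} with the frozen field \eqref{eq:frozen_velocity}, we have $\partial_s p=-\nabla\cdot(p\mathbf{v})$ and $\partial_s\hat p=-\nabla\cdot(\hat p\hat{\mathbf v})$. Adding and subtracting $\nabla\cdot(p\hat{\mathbf v})$ gives the linear transport equation with a source term
\begin{equation*}
\partial_s e+\nabla\cdot(e\,\hat{\mathbf v})=-\nabla\cdot\bigl(p\,(\mathbf v-\hat{\mathbf v})\bigr),\qquad e(\cdot,t_r)=0 .
\end{equation*}
The plan has three steps: (i) an $L_1$ stability estimate for the homogeneous transport operator, (ii) a pointwise-in-time bound on the source, and (iii) a Duhamel integration.

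\emph{Step (i).} Multiply the equation by $\mathrm{sgn}(e)$ (via a smooth regularization $\eta_\varepsilon(e)\to|e|$) and integrate over $\Omega$. The no-flux condition, shared by both fields, makes the flux term $\int_\Omega \mathrm{sgn}(e)\nabla\cdot(e\hat{\mathbf v})\,\mathrm{d}\mathbf x$ vanish in the limit, since it becomes a boundary integral of $|e|\hat{\mathbf v}\cdot\mathbf n$. This yields
\begin{equation*}
\frac{\mathrm d}{\mathrm ds}\|e(\cdot,s)\|_1\le \bigl\|\nabla\cdot\bigl(p(\mathbf v-\hat{\mathbf v})\bigr)(\cdot,s)\bigr\|_1 .
\end{equation*}
This is the step I expect to be the main obstacle. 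It needs enough regularity of $\hat{\mathbf v}$, which is Lipschitz because of the SPH smoothing, and a careful treatment of the regularized sign function. The cleanest fallback is the Lagrangian alternative: the solution operator of a conservative transport equation is an $L_1$ contraction, and Duhamel's formula then gives $\|e(\cdot,t)\|_1\le\int_{t_r}^t\|S(s)\|_1\,\mathrm ds$, where $S$ denotes the source.

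\emph{Step (ii).} Expand the source as $\nabla\cdot(p\,\delta\mathbf v)=\nabla p\cdot\delta\mathbf v+p\,\nabla\cdot\delta\mathbf v$, with $\delta\mathbf v(\cdot,s)=\mathbf v(\cdot,s)-\mathbf v(\cdot,t_r)=\int_{t_r}^s\partial_\tau\mathbf v\,\mathrm d\tau$. The assumptions \eqref{eq:acceleration_bounds} give $\|\delta\mathbf v\|_\infty\le A_{\max}(s-t_r)$ and $\|\nabla\cdot\delta\mathbf v\|_\infty\le D^{\mathrm{acc}}_{\max}(s-t_r)$. Applying Hölder's inequality in the form $\|fg\|_1\le\|f\|_1\|g\|_\infty$ then gives
\begin{equation*}
\|S(s)\|_1\le \bigl(A_{\max}\|\nabla p(\cdot,s)\|_1+D^{\mathrm{acc}}_{\max}\|p(\cdot,s)\|_1\bigr)(s-t_r)\le C_pA_{\max}(s-t_r),
\end{equation*}
using the definition \eqref{eq:normalized_structural_constant} of $C_p$.

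\emph{Step (iii).} Integrating from $t_r$ to $t$ gives $\mathcal E_1(t)=\|e(\cdot,t)\|_1\le \tfrac12 C_pA_{\max}(t-t_r)^2$. The $\mathcal O((t-t_r)^3)$ term is kept only as a safety margin. It absorbs the higher-order corrections that arise if one instead works with the linearized expansion, for example replacing $p$ by $\hat p$ inside the source or accounting for the $\mathcal O(t-t_r)$ drift of $\|\nabla p\|_1$. The phrase ``sufficiently short interval'' is what guarantees that the supremum defining $C_p$ is finite and that the regularity used in Step (i) persists on $[t_r,t]$. Finally, $\|p(\cdot,s)\|_1=1$ follows from mass conservation under the no-flux condition.
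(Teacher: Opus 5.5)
Your proposal is correct and follows essentially the same route as the paper. Both use the error equation $\partial_s e=-\nabla\cdot(e\hat{\mathbf v})-\nabla\cdot(p\,\Delta\mathbf v)$, the $L_1$ non-expansiveness of the conservative transport term under the shared no-flux condition, the splitting $\nabla\cdot(p\Delta\mathbf v)=\Delta\mathbf v\cdot\nabla p+p\,\nabla\cdot\Delta\mathbf v$ with H\"older bounds leading to $C_pA_{\max}(s-t_r)$, and integration from $t_r$. Your only changes are refinements: you justify the $L_1$ step by the renormalized-sign argument that the paper merely asserts, and writing $\Delta\mathbf v(\cdot,s)=\int_{t_r}^{s}\partial_\tau\mathbf v\,\mathrm d\tau$ instead of Taylor-expanding at $t_r$ gives the bound with no $\mathcal{O}((t-t_r)^3)$ remainder and no second-order time regularity of $\mathbf v$, using the acceleration bounds uniformly on $[t_r,t]$.
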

	\begin{proof}
		Let $e=p-\hat{p}$ and $\Delta\mathbf{v}=\mathbf{v}-\hat{\mathbf{v}}$. Subtracting the physical and extrapolated continuity equations yields
		\begin{equation}
			\partial_t e
			=
			-\nabla\cdot(e\hat{\mathbf{v}})
			-
			\nabla\cdot(p\Delta\mathbf{v}).
		\end{equation}
		Under the common no-flux condition, the transport term does not increase the $L_1$ norm, and hence
		\begin{align}
			\frac{\mathrm{d}}{\mathrm{d}t}\mathcal{E}_{1}(t)
			&\leq
			\int_{\Omega}
			\left|
			\Delta\mathbf{v}\cdot\nabla p
			+
			p\nabla\cdot\Delta\mathbf{v}
			\right|
			\,\mathrm{d}\mathbf{x}.
		\end{align}
		Using
		$\Delta\mathbf{v}
		=
		\partial_t\mathbf{v}|_{t_r}(t-t_r)
		+
		\mathcal{O}((t-t_r)^2)$
		together with \eqref{eq:acceleration_bounds} gives
		\begin{equation}
			\frac{\mathrm{d}}{\mathrm{d}t}\mathcal{E}_{1}(t)
			\leq
			C_pA_{\max}(t-t_r)
			+
			\mathcal{O}((t-t_r)^2).
		\end{equation}
		Integrating from $t_r$ and using $\mathcal{E}_{1}(t_r)=0$ proves \eqref{eq:transport_bound}.
	\end{proof}

Neglecting the higher-order term in \eqref{eq:transport_bound} gives the short-interval design rule
\begin{equation}
	T_{\max}
	\leq
	\sqrt{\frac{2\epsilon_1}{C_pA_{\max}}}.
	\label{eq:tmax_design}
\end{equation}
This timeout is a leading-order safeguard rather than an instantaneous error certificate. In practice, $A_{\max}$ and $D_{\max}^{\mathrm{acc}}$ can be selected as conservative envelopes from offline or recent telemetry, while $C_p$ is evaluated from the reconstructed field over the operating regime. The locally measurable $\mathcal D_{\mathrm{peak}}$ may terminate extrapolation earlier than $T_{\max}$.

\subsection{Regime-Dependent Signaling Scalability}

Let $\rho_{\mathrm{micro}}$ denote the fraction of synchronization decisions requiring a full-state micro-correction, due to either the deformation trigger or the timeout safeguard. Since a micro packet has $\mathcal O(N)$ payload while all non-micro decisions require at most $\mathcal O(1)$ signaling, the average signaling order is
\begin{equation}
	(1-\rho_{\mathrm{micro}})\mathcal O(1)
	+\rho_{\mathrm{micro}}\mathcal O(N).
\end{equation}
Thus, HTS does not remove the worst-case $\mathcal O(N)$ cost; its gain depends on how often full-state corrections are required.

During coherent-motion intervals, HTS therefore uses either silence or a constant-size macro update. When substantial local deformation is detected, or when the timeout expires, the protocol reverts to an $\mathcal O(N)$ full-state micro-correction. Hence, the signaling benefit is amortized rather than worst-case: it depends on the mobility regime through $\rho_{\mathrm{micro}}$. No worst-case $\mathcal O(1)$ signaling claim is made.

\subsection{Predictor-Based Latency Compensation}

Because TDT propagates the cyber state between updates, the same extrapolation mechanism can compensate for a known control-loop delay. For a prediction horizon $\tau_{\mathrm{lead}}$,
\begin{equation}
	\hat{\lambda}(\mathbf{x},t+\tau_{\mathrm{lead}})
	=
	\hat{\lambda}(\mathbf{x},t)
	-
	\int_t^{t+\tau_{\mathrm{lead}}}
	\nabla\cdot(\hat{\lambda}\hat{\mathbf v})\,\mathrm{d}\tau.
\end{equation}
We set $\tau_{\mathrm{lead}}$ according to the nominal round-trip delay. The same frozen-flow velocity field used for field extrapolation is also used to advance the control reference in the closed-loop evaluation. This does not remove the physical delay; its benefit is therefore evaluated empirically.


\section{Numerical Evaluation}\label{S5}

We evaluate TDT through four complementary numerical studies. First, a controlled field-evolution example examines the divergence--entropy relation underlying the proposed deformation cue. We then evaluate computational scalability and the communication--tracking trade-off of the hierarchical synchronization design. Finally, predictor-assisted delay compensation is examined in a closed-loop control setting. Root mean square error (RMSE) is used where coordinate-level tracking accuracy is evaluated.

\subsection{Numerical Check of the Divergence--Entropy Relation}

\begin{figure}[h]
	\vspace*{-3mm}
	\centering
	\includegraphics[width=0.92\columnwidth]{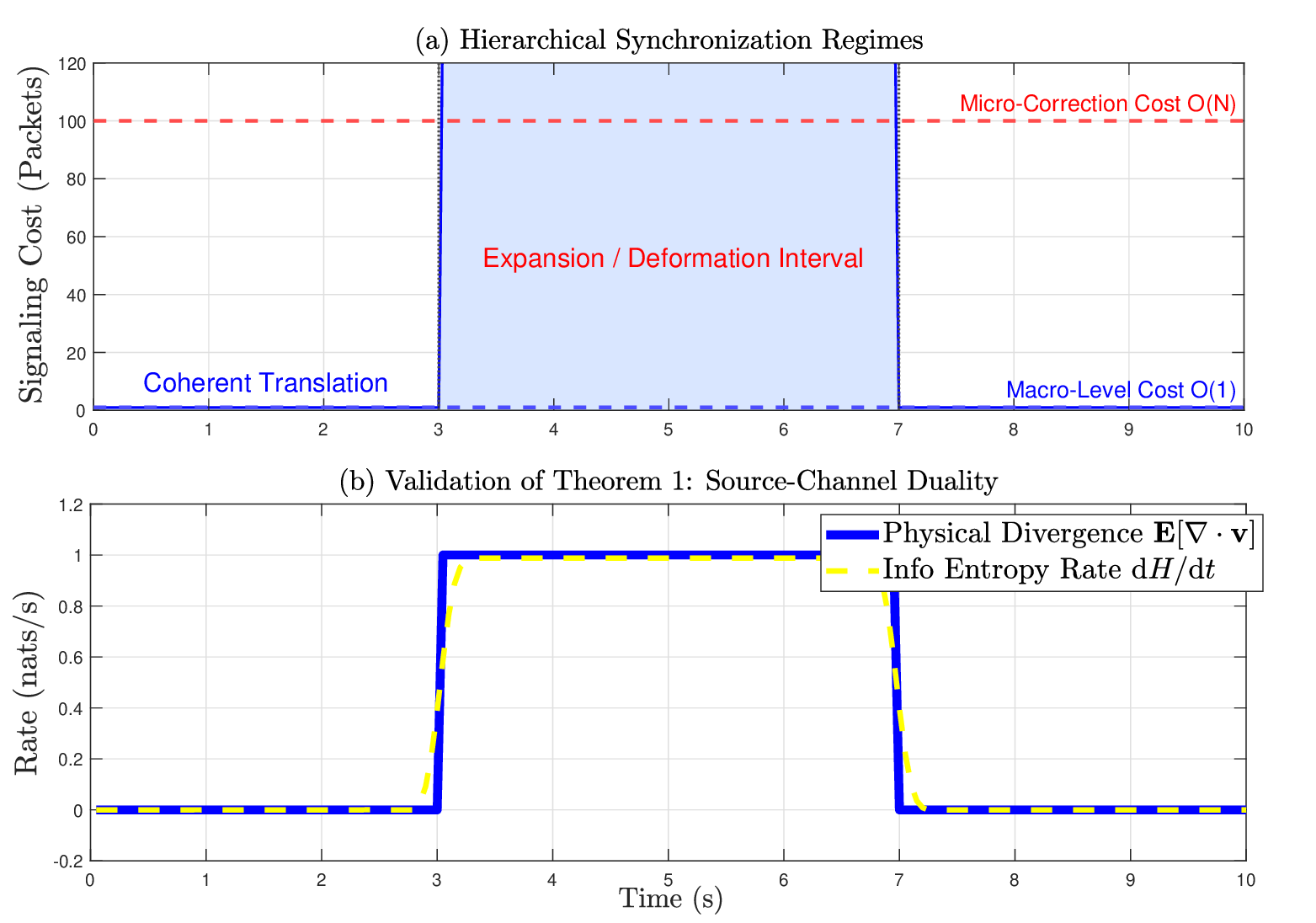}
	\vspace*{-3mm}
	\caption{Numerical consistency between the mean velocity divergence and the entropy-rate curve during a controlled field expansion.}
	\label{fig:theorem_validation}
	\vspace*{-1mm}
\end{figure}
	To examine the physical relation underlying the deformation cue, we consider a controlled transition from coherent translation to radial expansion. The resulting mean-divergence and entropy-rate curves are compared against the closed-domain relation in \eqref{eq:closed_entropy_relation}. As shown in Fig.~\ref{fig:theorem_validation}, the two quantities exhibit consistent temporal evolution during the prescribed expansion interval. This experiment serves as a numerical consistency check of the physical relation rather than a validation of a new mathematical identity.

\subsection{Computational Scalability}

\begin{figure}[h]
	\vspace*{-2mm}
	\centering
	\includegraphics[width=1\columnwidth]{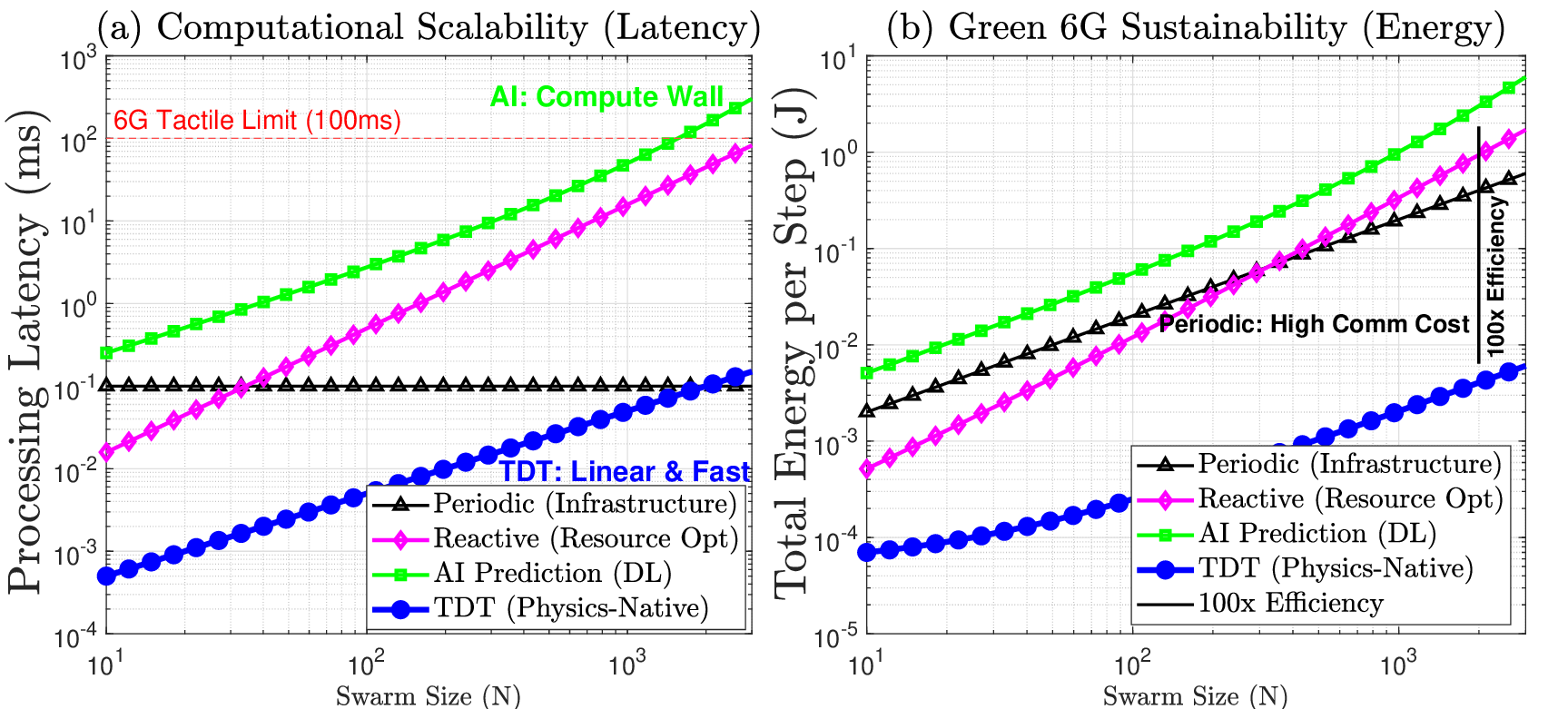}
	\vspace*{-5mm}
\caption{Computational scalability with swarm size: (a) processing latency and (b) per-step energy cost under the adopted cost model.}
	\label{fig:scalability}
	\vspace*{-1mm}
\end{figure}

Realizing the signaling benefit of HTS requires continuous edge-side trigger evaluation. Fig.~\ref{fig:scalability} compares the computational and signaling cost models as the swarm size increases. Under the adopted cost assumptions, TDT exhibits approximately linear processing growth over the evaluated range, whereas the Transformer-based baseline grows more rapidly with $N$. The corresponding energy trend shows a similar scaling advantage, indicating that edge-side trigger evaluation does not dominate the synchronization benefit under the considered model.

\subsection{Communication--Tracking Trade-off}

\begin{figure}[h]
	\vspace*{-3mm}
	\centering
	\includegraphics[width=0.85\columnwidth]{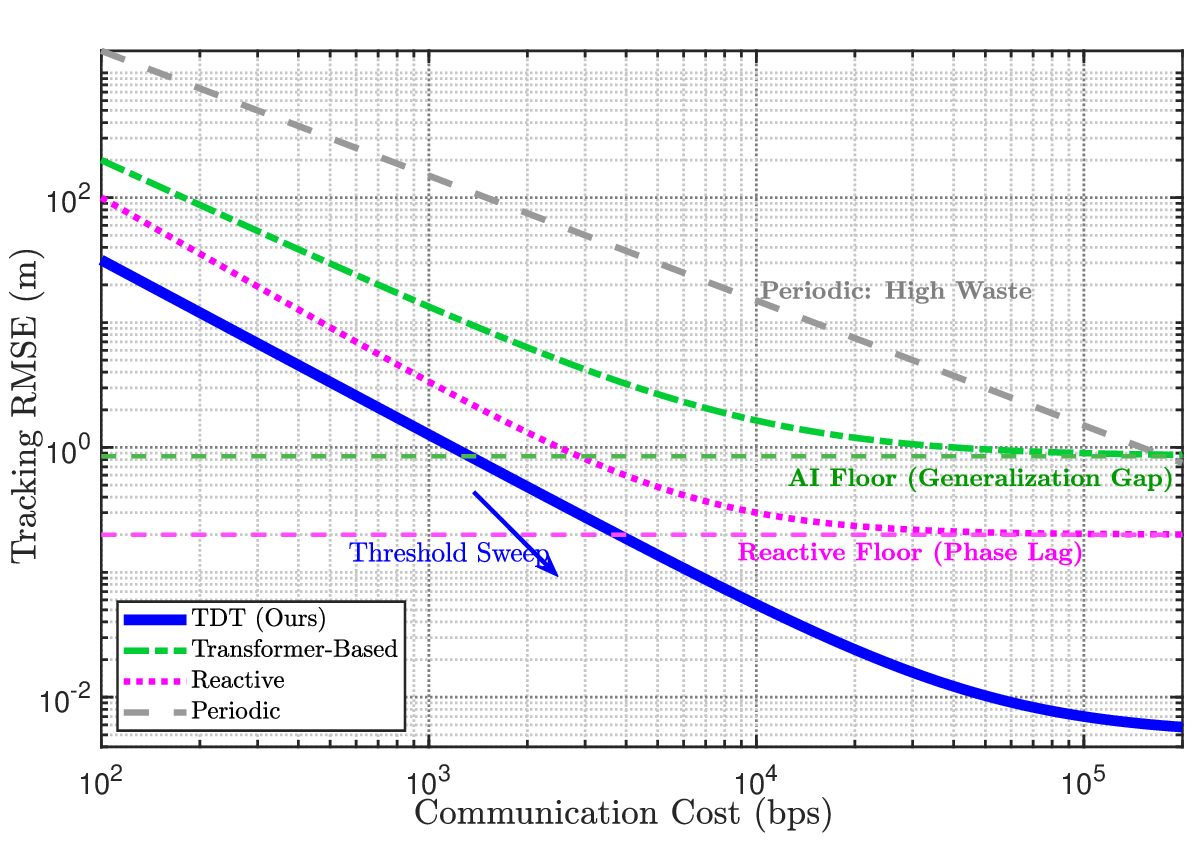}
	\vspace*{-4mm}
	\caption{Communication--tracking trade-off across the evaluated operating points.}
	\label{fig:pareto}
	\vspace*{-2mm}
\end{figure}

Fig.~\ref{fig:pareto} reports the relationship between generated synchronization cost and tracking RMSE across the evaluated operating points. TDT maintains a favorable operating envelope over the considered range, indicating that hierarchical updates can reduce repeated full-state transmission while retaining tracking accuracy. The result is specific to the adopted motion, payload, and triggering models and is not a proof of global Pareto optimality.

\subsection{Predictor-Assisted Closed-Loop Control}
\begin{figure}[h]
	\vspace*{-3mm}
	\centering
	\includegraphics[width=1\columnwidth]{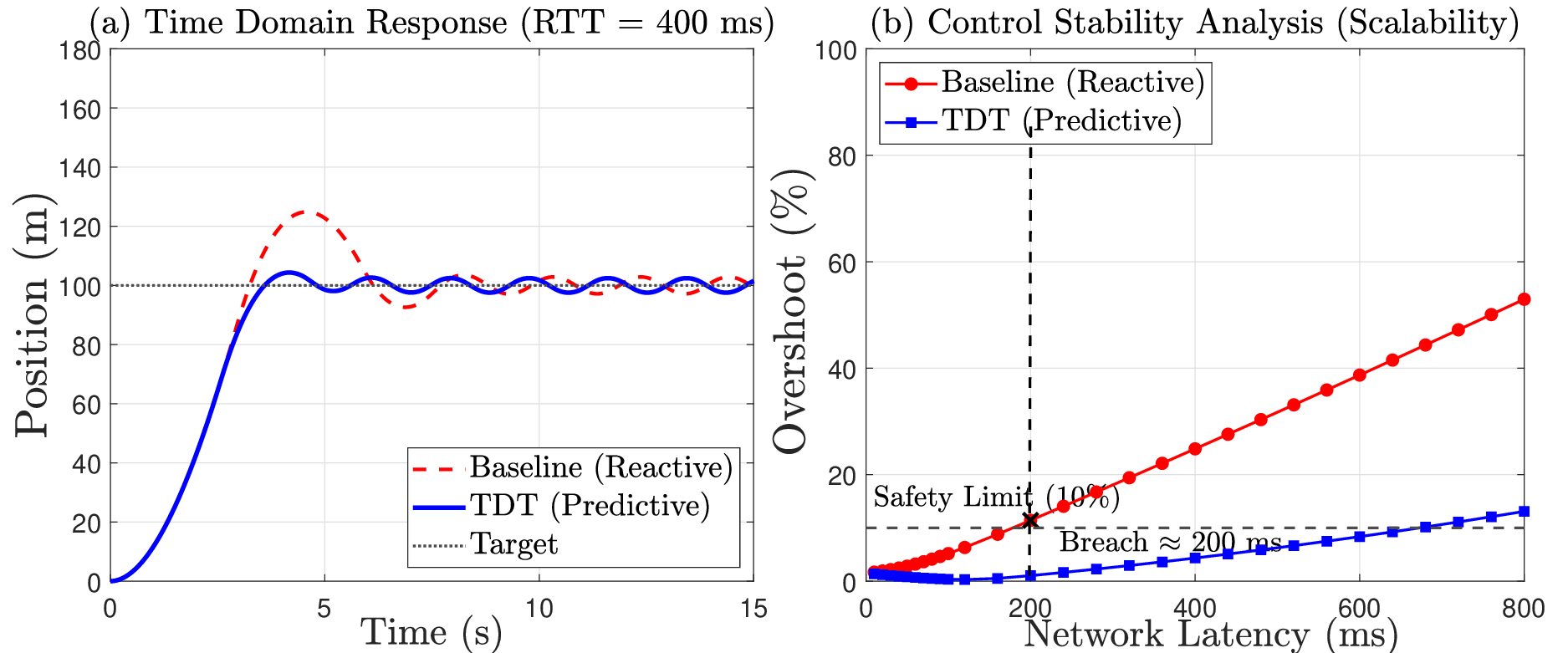}
	\vspace*{-8mm}
	\caption{Closed-loop tracking with predictor-assisted compensation under a 400\,ms round-trip delay.}
	\label{fig:stability_analysis}
	\vspace*{5mm}
\end{figure}

To examine the control-level benefit of the extrapolation mechanism, we use the predicted motion state to advance the reference supplied to a proportional--derivative (PD) controller. We consider a deterministic round-trip delay $\tau_{\mathrm{loop}}=400$\,ms and set $\tau_{\mathrm{lead}}=\tau_{\mathrm{loop}}$, while stochastic jitter is omitted. As shown in Fig.~\ref{fig:stability_analysis}, predictor-assisted control yields lower overshoot and faster settling than the reactive baseline under this delay. A latency sweep further evaluates sensitivity to increasing delay. These results indicate improved delay tolerance under the evaluated controller and delay model, rather than a general stability guarantee under arbitrary stochastic latency.

\section{Conclusion}\label{S6}

This paper presented TDT, a physics-guided synchronization framework for large-scale kinetic-swarm digital twins. The proposed HTS protocol combines low-dimensional macroscopic updates, SPH-based deformation triggering, and full-state microscopic corrections. An $L_1$ transport-error analysis provides a short-interval rule for limiting silent extrapolation, while the local SPH descriptor serves as an implementation-oriented early-warning trigger. Across the evaluated numerical settings, TDT provides a favorable communication--tracking trade-off, while predictor-assisted extrapolation improves delay tolerance in the tested control setting. These results are limited to the adopted closed-domain, motion, telemetry, and payload models; open boundaries, sensing uncertainty, and channel-aware synchronization remain for future work.

\small

\end{document}